\def\ra{\rightarrow}
\def\la{\leftarrow}
\newbox\tempa
\newbox\tempb
\newdimen\tempc
\def\mud#1{\hfil $\displaystyle{\mathstrut #1}$\hfil}
\def\rig#1{\hfil $\displaystyle{#1}$}
\def\irulehelp#1#2#3{\setbox\tempa=\hbox{$\displaystyle{\mathstrut #2}$}%
		        \setbox\tempb=\vbox{\halign{##\cr
	\mud{#1}\cr
	\noalign{\vskip\the\lineskip}%
	\noalign{\hrule height 0pt}%
	\rig{\vbox to 0pt{\vss\hbox to 0pt{${\; #3}$\hss}\vss}}\cr
	\noalign{\hrule}%
	\noalign{\vskip\the\lineskip}%
	\mud{\copy\tempa}\cr}}%
		      \tempc=\wd\tempb
		      \advance\tempc by \wd\tempa
		      \divide\tempc by 2 }
\def\irule#1#2#3{{\irulehelp{#1}{#2}{#3}%
		     \hbox to \wd\tempa{\hss \box\tempb \hss}}}
\begin{document}

\title{The Undecidability of Typability in the Lambda-Pi-Calculus}

\author{Gilles Dowek}

\institute{School of Computer Science, Carnegie Mellon University\\
Pittsburgh,  PA 15213-3890, U.S.A.}

\maketitle

\begin{abstract}
The set of pure terms which are typable in the $\lambda \Pi$-calculus in 
a given context is not recursive. So there is no general type inference 
algorithm for the programming language Elf and, in some cases, some type 
information has to be mentioned by the programmer.
\end{abstract}

\section*{Introduction}
\thispagestyle{empty}
The programming language Elf \cite{Pfenning90} is an extension of 
$\lambda$-Prolog in which
the clauses are expressed in a $\lambda$-calculus with dependent types 
($\lambda \Pi$-calculus \cite{HHP}). Since this calculus verifies the 
propositions-as-types principle, a proof of a proposition is merely a term of 
the calculus. Using this property of the $\lambda \Pi$-calculus, the programmer
can either express a proposition and let the machine search for a proof of 
this proposition (as in usual logic programming) or express both a proposition
and its proof and let the machine check that this proof is correct (as in 
proof-verification systems). Thus Elf can be used both to express logic 
programs and to reason about of their properties.

A {\it type inference algorithm} for a given language is an 
algorithm which assigns a type to each variable of a program. Thus, when such 
an algorithm exists, the types of the variables do not need to be mentioned by
the programmer. As an example, a type inference algorithm for the language ML
is given in \cite{DamMil}.

We show here that the set of pure terms which are typable in the
$\lambda \Pi$-calculus in a given context is not recursive. So there is no 
general type inference algorithm for the language Elf and, in some cases, some
type information has to be mentioned by the programmer. 

As already remarked in \cite{DamMil}, typing a term requires the solution of a
unification problem. Typing a term in the simply typed $\lambda$-calculus (and
in ML) requires the solution of a first order unification problem and thus 
typability is decidable in these languages.

Typing a term in the $\lambda \Pi$-calculus requires the solution of a
unification problem which is also formulated in the $\lambda \Pi$-calculus. 
Unification in the $\lambda \Pi$-calculus has been shown to be undecidable 
(third order unification in \cite{Huet73}, then second order unification 
in \cite{Goldfarb} and third order pattern matching in \cite{cras}), i.e. 
there is no algorithm that decides if such a unification problem has a 
solution. But in order to prove the undecidability of typability in the 
$\lambda \Pi$-calculus we need to prove that there is no algorithm that 
decides if a unification problem {\it produced by a typing problem} has a 
solution. Unification problems produced by typing problems are very restricted
and the undecidability proofs of unification have to be adapted to this class 
of problems. We show here that the proof of \cite{Huet73} can easily be 
adapted. 

\section{The Lambda-Pi-Calculus}

We follow \cite{Barendregt} for a presentation of the $\lambda \Pi$-calculus. 
The set of {\it terms} is inductively defined by
$$T~::=~Type~|~Kind~|~x~|~(T~T)~|~\lambda x:T.T~|~\Pi x:T.T$$

In this note, we ignore variable renaming problems. A rigorous presentation 
would use de Bruijn indices. The terms $Type$ and $Kind$ are called 
{\it sorts}, the terms $x$ {\it variables}, the terms $(t~t')$ 
{\it applications}, the terms $\lambda x:t.t'$ {\it abstractions} and the 
terms $\Pi x:t.t'$ {\it products}. The notation $t \ra t'$ is used for 
$\Pi x:t.t'$ when $x$ has no free occurrence in $t'$.

Let $t$ and $t'$ be terms and $x$ a variable. We write $t[x \la t']$ for the 
term obtained by substituting $t'$ for $x$ in $t$. We write $t \cong t'$ 
when the terms
$t$ and $t'$ are $\beta$-equivalent ($\beta \eta$-equivalence can also be 
considered without affecting the proof given here). 

A {\it context} is a list of pairs $<x,T>$ (written $x:T$) where $x$
is a variable and $T$ a term. 

We define inductively two judgements: {\it $\Gamma$ is well-formed} and 
{\it $t$ has type $T$ in $\Gamma$} ($\Gamma \vdash t:T$) where $\Gamma$ is
a context and $t$ and $T$ are terms.
$$\irule{}
        {[~]~\mbox{well-formed}}
        {}$$
$$\irule{\Gamma \vdash T:s}
        {\Gamma[x:T]~\mbox{well-formed}}
        {s \in \{Type,Kind\}}$$
$$\irule{\Gamma~\mbox{well-formed}}
        {\Gamma \vdash Type:Kind}
        {}$$
$$\irule{\Gamma~\mbox{well-formed}~~x:T \in \Gamma} 
        {\Gamma \vdash x:T}
        {}$$
$$\irule{\Gamma \vdash T:Type~~\Gamma[x:T] \vdash T':s}
        {\Gamma \vdash \Pi x:T.T':s}
        {s \in \{Type,Kind\}}$$
$$\irule{\Gamma \vdash \Pi x:T.T':s~~\Gamma[x:T] \vdash t:T'}
        {\Gamma \vdash \lambda x:T.t:\Pi x:T.T'}
        {s \in \{Type,Kind\}}$$
$$\irule{\Gamma \vdash t:\Pi x:T.T'~~\Gamma \vdash t':T}
        {\Gamma \vdash (t~t'):T'[x \la t']}
        {}$$
$$\irule
     {\Gamma \vdash T:s~~\Gamma \vdash T':s~~\Gamma \vdash t:T~~T \cong T'}
        {\Gamma \vdash t:T'}
        {s \in \{Type,Kind\}}$$

A term $t$ is said to be {\it well-typed} in a context $\Gamma$ if there exists
a term $T$ such that $\Gamma \vdash t:T$.

The reduction relation on well-typed terms is strongly normalizable and 
confluent. Thus each well-typed term has a unique normal form and two terms 
are equivalent if they have the same normal form \cite{HHP} 
(\cite{Geuvers} \cite{Salvesen} \cite{Coquand91} if $\beta \eta$-equivalence 
is considered).

A term $t$ well-typed in a context $\Gamma$ has a unique type modulo 
equivalence.

A normal term $t$ well-typed in a context $\Gamma$ has either the form 
$$t = \lambda x_{1}:T_{1}....\lambda x_{n}:T_{n}.(x~c_{1}~...~c_{n})$$
where $x$ is a variable or a sort or 
$$t = \lambda x_{1}:T_{1}....\lambda x_{n}:T_{n}.\Pi x:P.Q$$
The {\it head symbol} of $t$ is $x$ is the first case and, by convention, the
symbol $\Pi$ in the second. The {\it top variables} of $t$ are the variables 
$x_{1}, ..., x_{n}$. 

\section{Typability in the Lambda-Pi-Calculus}

\begin{definition}
A term $t$ of type $T$ in a context $\Gamma$ is said to be 
an {\it object} in $\Gamma$ if $\Gamma \vdash T:Type$. 
\end{definition}

\begin{proposition}
If a term $t$ is an object in a context $\Gamma$ then it is either a 
variable, an application or an abstraction. If it is an application 
$t = (u~v)$
then both terms $u$ and $v$ are objects in $\Gamma$, if it is an abstraction
$t = \lambda x:U.u$ then the term $u$ is an object in the context 
$\Gamma [x:U]$. 
\end{proposition}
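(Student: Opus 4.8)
The plan is to reason by inversion on the typing derivation (the generation lemma of the $\lambda\Pi$-calculus), using the meta-properties already recorded: preservation of types under reduction, uniqueness of types modulo equivalence, correctness of types (the type of a well-typed term is itself a sort or well-typed), and the substitution lemma. I will also use two elementary facts about sorts: $Kind$ has no type, since no typing rule can conclude $\Gamma\vdash Kind:C$ (the axiom types only $Type$, the variable rule does not apply to a sort, and the conversion rule presupposes that $Kind$ is already typed); and the only type of $Type$ modulo equivalence is $Kind$. Finally I record a comparison fact to be used repeatedly: if $A\cong B$ with $\Gamma\vdash A:\alpha$ and $\Gamma\vdash B:\beta$, then $\alpha\cong\beta$. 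Indeed $A$ and $B$ share a normal form, and reducing to it while preserving types turns uniqueness of types into $\alpha\cong\beta$.

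First I would rule out that an object $t$, with $\Gamma\vdash t:T$ and $\Gamma\vdash T:Type$, is a sort or a product. The case $t=Kind$ is impossible since $Kind$ is untypable. If $t=Type$ then uniqueness gives $T\cong Kind$; reducing $T$ to its normal form $Kind$ while preserving types would yield $\Gamma\vdash Kind:Type$, contradicting the untypability of $Kind$. If $t=\Pi x:P.Q$ then inversion gives $\Gamma\vdash\Pi x:P.Q:s$ with $s\in\{Type,Kind\}$ and $T\cong s$; preserving types as $T$ reduces to its normal form $s$ gives $\Gamma\vdash s:Type$, but $\Gamma\vdash Type:Type$ violates uniqueness (the type of $Type$ is $Kind\not\cong Type$) and $\Gamma\vdash Kind:Type$ is again impossible. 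Hence $t$ is a variable, an application, or an abstraction.

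For the application case $t=(u~v)$, inversion on the application rule yields $\Gamma\vdash u:\Pi x:P.Q$, $\Gamma\vdash v:P$ and $T\cong Q[x\la v]$. Correctness of types gives $\Gamma\vdash\Pi x:P.Q:s_{0}$, and inversion on the product rule then gives $\Gamma\vdash P:Type$ together with $\Gamma[x:P]\vdash Q:s_{0}$. The first of these already shows that $v$ is an object. For $u$ I must check $s_{0}=Type$: the substitution lemma gives $\Gamma\vdash Q[x\la v]:s_{0}$, and since $Q[x\la v]\cong T$ with $\Gamma\vdash T:Type$, the comparison fact forces $s_{0}\cong Type$, hence $s_{0}=Type$ and $\Gamma\vdash\Pi x:P.Q:Type$, so $u$ is an object. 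The abstraction case $t=\lambda x:U.u$ is similar: inversion gives $\Gamma\vdash\Pi x:U.Q:s$, $\Gamma[x:U]\vdash u:Q$ and $T\cong\Pi x:U.Q$, and product inversion gives $\Gamma\vdash U:Type$ and $\Gamma[x:U]\vdash Q:s$. Here $\Pi x:U.Q\cong T$ with $\Gamma\vdash T:Type$, so the comparison fact gives $s=Type$ directly, whence $\Gamma[x:U]\vdash Q:Type$ and $u$ is an object in $\Gamma[x:U]$.

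I expect the only real obstacle to be separating $Type$ from $Kind$, that is, showing that the products arising as the type of $u$ and inside the abstraction are formed at the level $Type$ rather than $Kind$; everything else is bookkeeping with the generation lemma. This is precisely where the hypothesis $\Gamma\vdash T:Type$ is spent: through the comparison fact (built on preservation and uniqueness of types) and, in the application case, through the substitution lemma, which transports the level of $Q$ down to $Q[x\la v]$.
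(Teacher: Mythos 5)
Your proof is correct; the paper states this proposition without any proof, treating it as a routine consequence of the standard metatheory of the $\lambda\Pi$-calculus, and your argument via the generation lemmas, subject reduction, correctness of types and the substitution lemma (with the comparison fact doing the work of separating $Type$ from $Kind$) is exactly the intended one. The only caveat is that, of the meta-properties you describe as ``already recorded,'' the paper actually records only uniqueness of types and strong normalization/confluence --- the generation, preservation, correctness-of-types and substitution lemmas are standard but must be imported from the cited references rather than from the paper itself.
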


\begin{definition}
The set of {\it pure terms} is inductively defined by
$$T~::=~x~|~(T~T)~|~\lambda x.T$$
\end{definition}

\begin{definition}
Let $t$ be an object in a context $\Gamma$, the {\it content} of $t$ ($|t|$) is
the pure term defined by induction over the structure of $t$ by\\
$\bullet$
$|x| = x$,\\
$\bullet$
$|(t~t')| = (|t|~|t'|)$,\\
$\bullet$
$|\lambda x:U.t| = \lambda x.|t|$.

A pure term $t$ is said to be {\it typable} in a context $\Gamma$ if there 
exists a term $t'$ well-typed in an extension $\Gamma \Delta$ of $\Gamma$ such
that $t'$ is an object in $\Gamma \Delta$ and $t = |t'|$.
\end{definition}

\begin{remark}
Typing a pure term in a context $\Gamma$ is assigning a type to bound 
variables and to some of the free variables, while the type of the other free 
variables is given in the context $\Gamma$. When the context $\Gamma$ is empty,
then typing a term in $\Gamma$ is assigning a type to both bound and free 
variables.
\end{remark}

\begin{proposition}
Typability in the empty context is decidable in the $\lambda \Pi$-calculus. 
\end{proposition}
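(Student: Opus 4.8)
The plan is to reduce typability in the empty context to typability in the simply typed $\lambda$-calculus, which is decidable by reduction to first-order unification (as recalled in the introduction and in \cite{DamMil}). Concretely, I would prove that a pure term $t$ is typable in the empty context in the $\lambda\Pi$-calculus if and only if it is typable in the simply typed $\lambda$-calculus with a single base type $\iota$ (one base type suffices, since any simple typing can be specialized to $\iota$). As the latter property is decidable, so is the former.

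For the easy direction (simply typed $\Rightarrow \lambda\Pi$), I would fix a context $\Delta$ containing the declaration $\iota : Type$ together with one declaration $x : A_x$ for each free variable of $t$, where $A_x$ is the $\lambda\Pi$-type over $\iota$ corresponding to the simple type assigned to $x$. Here simple types embed into $\lambda\Pi$-types by reading $\sigma \to \tau$ as $\Pi z : \hat\sigma.\hat\tau$ with $z$ not free, and each base type as $\iota$. Decorating every bound variable of $t$ with the embedding of its simple type yields a term $t'$ with $|t'| = t$; a routine induction on the simple typing derivation shows that $t'$ is well-typed and is an object in $\Delta$. Hence $t$ is typable in the empty context.

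The substantial direction is $\lambda\Pi \Rightarrow$ simply typed. Suppose $t = |t'|$ where $t'$ is an object well-typed in some extension $\Delta$ of the empty context. I would define an erasure $\kappa$ sending each $\lambda\Pi$-type of type $Type$ to a simple type over $\iota$, by $\kappa(\Pi x : A.B) = \kappa(A) \to \kappa(B)$ and $\kappa(a~M_1~\dots~M_k) = \iota$ for an atomic type whose head $a$ is a type-family variable of $\Delta$. Extending $\kappa$ to contexts by erasing the type of each object variable, the goal is the erasure lemma: if $\Gamma \vdash u:T$ with $u$ an object, then $\kappa(\Gamma) \vdash |u| : \kappa(T)$ in the simply typed $\lambda$-calculus. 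This is proved by induction on the derivation, using Proposition~1 to keep subterms objects; the abstraction and application cases rely on the fact that $\kappa$ forgets the object arguments inside atomic types, so that $\kappa(B[x \la v]) = \kappa(B)$, which is exactly what is needed to match the $\lambda\Pi$ application rule, whose result type $T'[x \la t']$ then erases to the codomain $\kappa(T')$ of $\kappa(\Pi x:T.T')$. Applying the lemma to $t'$ yields a simple typing of $t$.

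The main obstacle I anticipate is the conversion rule: the erasure lemma goes through at that rule only if $\kappa$ is invariant under $\beta$-equivalence, i.e. $T \cong T' \Rightarrow \kappa(T) = \kappa(T')$. The delicate point is that a type-level redex can expose new products (for instance $(\lambda x:A.\Pi y:C.D)~M$ reduces to a product), so $\kappa$ does change the syntactic shape of a type along reduction. I would settle this by defining $\kappa$ on the normal form of the type and invoking the strong normalization and confluence results cited earlier, so that $\cong$ is decided by identity of normal forms and $\kappa$ is well-defined on $\cong$-classes; equivalently, one proves directly that a single reduction step preserves $\kappa$. Once this invariance is established, the two directions combine into the desired equivalence with simply typed typability, and decidability follows.
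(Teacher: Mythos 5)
Your proposal is correct and follows essentially the same route as the paper: reduce typability of pure terms in the empty context to typability in the simply typed $\lambda$-calculus, which is decidable. The paper simply cites \cite{HHP} for the fact that the two classes of typable pure terms coincide, whereas you sketch a proof of that coincidence (via the embedding and the erasure $\kappa$ on normal types); your treatment of the conversion rule by normalizing before erasing is the right way to handle the only delicate point.
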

\begin{proof}
Pure terms typable in the empty context in the $\lambda \Pi$-calculus 
and in the simply typed $\lambda$-calculus are the same \cite{HHP} and 
typability is decidable in simply typed $\lambda$-calculus \cite{DamMil}. 
\end{proof}

\section{Post Correspondence Problem}

\begin{definition} Post Correspondence Problem\\
A {\it Post correspondence problem} is a finite set of pairs of words over the
two letters alphabet $\{A,B\}$ : 
$\{<\varphi_{1}, \psi_{1}>, ..., <\varphi_{n}, \psi_{n}>\}$. 
A {\it solution} to such a problem is a non empty sequence of integers
$i_{1}, ..., i_{p}$ such that 
$$\varphi_{i_{1}} ... \varphi_{i_{p}} = \psi_{i_{1}} ... \psi_{i_{p}}$$
\end{definition}

\begin{theorem} {} (Post \cite{Post})
It is undecidable whether or not a Post problem has a solution.
\end{theorem}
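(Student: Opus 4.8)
The plan is to establish undecidability by reduction from the halting problem for Turing machines, which is itself undecidable. Given a Turing machine $M$ and an input $w$, I would construct (uniformly and computably) a Post problem $P_{M,w}$ with the property that $P_{M,w}$ has a solution if and only if $M$ halts on $w$. Since the map $(M,w) \mapsto P_{M,w}$ is computable, a decision procedure for the existence of Post solutions would yield one for halting, a contradiction; hence no such procedure exists.

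The heart of the construction is to make the two words built from a candidate solution spell out, side by side, the successive configurations of the computation of $M$ on $w$, separated by a fresh symbol $\#$. I would first target the \emph{modified} problem, in which a solution is required to begin with a designated first pair; this restriction is convenient and is removed afterwards. A configuration is encoded as a string $u\,q\,a\,v$ recording the tape to the left of the head, the state $q$, the scanned symbol $a$, and the tape to the right. The first pair is $\langle \#, \#\,q_0 w\,\# \rangle$, so that the lower word starts one whole configuration ahead of the upper word. To every tape symbol $c$ I associate a \emph{copy} pair $\langle c, c \rangle$ and I add $\langle \#, \# \rangle$; these transcribe unchanged portions of the tape. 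To each transition of $M$ I associate a \emph{step} pair --- e.g. for $\delta(q,a)=(q',b,R)$ the pair $\langle q\,a,\ b\,q' \rangle$, with symmetric pairs for left moves and for the tape boundary --- so that where the upper word reproduces the old local configuration the lower word produces its successor. Finally, for a halting state $q_f$ I add \emph{erasing} pairs such as $\langle c\,q_f, q_f \rangle$ and $\langle q_f\,c, q_f \rangle$ together with a closing pair $\langle q_f\,\#\#, \# \rangle$, which let the upper word catch up to the lower word exactly when, and only when, the computation has reached $q_f$.

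It remains to pass from the modified problem back to the ordinary one, where any pair may open a solution. For this I would introduce a fresh separator $\ast$ and replace each pair $\langle x,y \rangle$ by a pair in which $\ast$ is interleaved before (resp. after) each letter of $x$ and $y$ in a fixed asymmetric pattern, rigged so that the only pair whose two words can agree on their first letter is the image of the designated first pair; a final pair involving a new end-marker then permits closure. This forces every ordinary solution to begin with the intended starter and to remain synchronized, so the two problems are equisolvable.

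The main obstacle, and where the real content lies, is the correctness argument for the encoding --- in particular the converse direction. That a halting computation yields a solution is a direct matter of concatenating the corresponding pairs. The delicate point is to show that \emph{every} solution necessarily encodes a legal halting computation: one must exploit the invariant that the lower word always leads the upper word by exactly one configuration to argue, by induction on the length of the match, that at each stage the only pairs that keep one word a prefix of the other are precisely the copy and step pairs dictated by $M$'s transition function, so that no spurious match --- one not corresponding to a genuine run of $M$ --- can ever close. Carrying this prefix-maintenance invariant through the step and erasing pairs, and verifying it survives the $\ast$-interleaving of the final reduction, is the part that requires care.
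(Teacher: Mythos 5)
The paper itself offers no proof of this theorem: it is imported as a known result, cited to Post's 1946 paper, and the body of Dowek's argument only \emph{uses} the undecidability of the correspondence problem as a black box (reducing it to typability in the $\lambda\Pi$-calculus). So there is no proof in the paper to compare yours against line by line. What you have written is the standard modern argument --- reduce the halting problem to the \emph{modified} correspondence problem by making the bottom word run one configuration ahead of the top word, with copy pairs, transition pairs, erasing pairs for the halting state, and a closing pair, then eliminate the ``designated first pair'' restriction by interleaving a fresh separator $\ast$ asymmetrically. This is correct in outline, and you correctly identify where the real work lies: the converse direction, i.e.\ the prefix-maintenance invariant showing that every solution must trace out a genuine halting computation. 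Two small remarks. First, historically Post's own proof did not go through Turing machines and the MPCP at all; he reduced from the decision problem for normal systems (equivalently, semi-Thue systems), so your route is the now-standard textbook one rather than Post's. Second, your sketch leaves the usual loose ends that any careful write-up must close: the convention preventing the head from falling off the left end of the tape (or explicit boundary pairs handling $q$ adjacent to $\#$), and the verification that the $\ast$-interleaving really makes the image of the designated first pair the \emph{only} pair whose two components agree on their first letter. Neither is a gap in the idea, only in the execution. For the purposes of this paper, the citation to Post suffices and no proof is expected.
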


\section{Undecidability of Typability in the Lambda-Pi-Calculus}

Let us consider the context

\noindent $\Gamma = [T:Type;a:T \ra T;b:T \ra T;c:T; d:T; P:T \ra Type;$

\hfill $F:\Pi x:T.((P~x) \ra T)]$

\begin{definition} {} (Huet \cite{Huet73})
Let $\varphi$ be a word in the two 
letters alphabet $\{A,B\}$, we define by induction on the length of $\varphi$ 
the term $\hat{\varphi}$ well-typed in $\Gamma$ and the pure term 
$\tilde{\varphi}$ as follows 
$$\hat{\varepsilon} = \lambda y:T.y$$
$$\hat{A \varphi} = \lambda y:T.(a~(\hat{\varphi}~y))$$
$$\hat{B \varphi} = \lambda y:T.(b~(\hat{\varphi}~y))$$

$$\tilde{\varphi} = |\hat{\varphi}|$$
\end{definition}

\begin{proposition}
Let 
$\{<\varphi_{1}, \psi_{1}>, ..., <\varphi_{n}, \psi_{n}>\}$ be a Post problem, 
the non empty sequence $i_{1}, ..., i_{p}$ is a solution to this problem if and
only if 
$$(\hat{\varphi_{i_{1}}}~(...(\hat{\varphi_{i_{p}}}~c)...)) \cong 
                          (\hat{\psi_{i_{1}}}~(...(\hat{\psi_{i_{p}}}~c)...))$$
\end{proposition}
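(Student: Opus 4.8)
The plan is to compute the $\beta$-normal form of each side of the stated equivalence and to recognise it as a faithful encoding of the corresponding concatenated word; the two normal forms will then coincide precisely when the Post solution condition holds. Throughout, for a word $w$ and a term $t$ I write $w \cdot t$ for the term obtained by reading $w$ from left to right and applying $a$ for each letter $A$ and $b$ for each letter $B$, with $t$ placed at the innermost position; for instance $(AB) \cdot t = (a~(b~t))$ and $\varepsilon \cdot t = t$.

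First I would prove, by induction on the length of $\varphi$, the computation lemma $(\hat{\varphi}~t) \cong \varphi \cdot t$ for every term $t$. The base case is $(\hat{\varepsilon}~t) = ((\lambda y:T.y)~t) \cong t = \varepsilon \cdot t$, and each inductive step is a single $\beta$-reduction followed by the induction hypothesis, for instance $(\hat{A\varphi}~t) = ((\lambda y:T.(a~(\hat{\varphi}~y)))~t) \cong (a~(\hat{\varphi}~t)) \cong (a~(\varphi \cdot t)) = (A\varphi) \cdot t$, and symmetrically for the letter $B$.

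Next I would establish, by induction on $p$, that the left-hand nested application is $\beta$-equivalent to $(\varphi_{i_1} \ldots \varphi_{i_p}) \cdot c$ and the right-hand one to $(\psi_{i_1} \ldots \psi_{i_p}) \cdot c$. The key ingredient is the concatenation identity $u \cdot (v \cdot t) = (uv) \cdot t$, which holds by definition, since prepending the applications for the letters of $u$ to the term $v \cdot t$ yields exactly the term for the concatenation $uv$. Combining this with the computation lemma collapses the whole nest.

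Finally I would appeal to the strong normalisation and confluence stated earlier to replace $\beta$-equivalence by syntactic identity of normal forms. The terms of the form $w \cdot c$ are already normal, being built from the variables $a$, $b$, $c$ by application with no redex, so the two sides are equivalent if and only if $(\varphi_{i_1} \ldots \varphi_{i_p}) \cdot c$ and $(\psi_{i_1} \ldots \psi_{i_p}) \cdot c$ are the same term. It then remains only to observe that $w \mapsto w \cdot c$ is injective: the word $w$ is recovered by reading off the successive head symbols $a$ or $b$ down the applicative spine to $c$, so two such terms are equal exactly when the words are equal. Hence the equivalence holds if and only if $\varphi_{i_1} \ldots \varphi_{i_p} = \psi_{i_1} \ldots \psi_{i_p}$, which is the definition of a solution. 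I expect no serious obstacle; the only step demanding a little care is this final injectivity remark, which becomes transparent once the normal form has been written out explicitly.
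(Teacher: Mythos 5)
Your proof is correct. The paper states this proposition without any proof (it is taken as immediate from Huet's encoding), and your argument --- normalising each side to the applicative spine of $a$'s and $b$'s ending in $c$, invoking confluence to reduce $\beta$-equivalence to identity of normal forms, and observing that the word is recoverable from the spine --- is exactly the routine computation the paper leaves to the reader, carried out carefully and without gaps.
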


\begin{proposition}
If $g$ is a term such that the term $(g~a~...~a)$ ($n$ symbols $a$) is 
well-typed and is an object in an extension $\Gamma \Delta$ of $\Gamma$ then 
the term $g$ is well-typed in the context $\Gamma \Delta$ and its type is 
equivalent to the term 
$$\Pi x_{1}:T \ra T....\Pi x_{n}:T \ra T.(\beta~x_{1}~...~x_{n})$$
for some term $\beta$ of type $(T \ra T) \ra ... \ra (T \ra T) \ra Type$ in 
the context $\Gamma \Delta$.
\end{proposition}
\begin{proof} 
By induction on $n$. 
\end{proof}

\begin{proposition}
Let $t, u_{1}, ..., u_{n}, v$ be normal terms such that 
$(t~u_{1}~...~u_{n})$ is a well-typed term and its normal form is $v$. The 
head symbol of the $t$ is either the head symbol of $v$ or a top variable of 
$t$.
\end{proposition}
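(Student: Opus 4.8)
The plan is to argue by induction on $n$, the number of arguments. When $n = 0$ the term $(t)$ is just $t$, which is already normal, so $v = t$ and the two trivially share their head symbol. For the inductive step I would branch on the shape of the normal term $t$ given by the classification of normal forms recalled above, that is, according to whether or not $t$ begins with a leading abstraction.

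If $t$ carries no leading abstraction, then either $t = (h~c_{1}~...~c_{m})$ with $h$ a variable or a sort, in which case $(t~u_{1}~...~u_{n}) = (h~c_{1}~...~c_{m}~u_{1}~...~u_{n})$ has no head redex and all its subterms are already normal, so it is itself normal and $v$ has head symbol $h$; or $t = \Pi x:P.Q$, which cannot be applied to $n \geq 1$ arguments inside a well-typed term, since the type of a product is a sort and not a product, so this subcase is excluded by the hypothesis that $(t~u_{1}~...~u_{n})$ is well-typed. In either surviving situation the head symbol of $v$ is that of $t$ and the conclusion is immediate.

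If $t$ begins with an abstraction, I would write its normal form as $t = \lambda x_{1}:T_{1}....\lambda x_{k}:T_{k}.b$, where the body $b$ is either $(h~c_{1}~...~c_{m})$ with head symbol $h$ a variable or sort, or a product $\Pi x:P.Q$ with head symbol $\Pi$, so that $x_{1}, ..., x_{k}$ are the top variables of $t$. I then split on whether the head symbol equals $x_{1}$. If it does, the head symbol of $t$ is by definition one of its top variables and there is nothing further to prove. Otherwise I contract the leading redex: $(t~u_{1}~...~u_{n})$ reduces to $(t[x_{1} \la u_{1}]'~u_{2}~...~u_{n})$, where I abbreviate by $t'=\lambda x_{2}:T_{2}....\lambda x_{k}:T_{k}.b$ the body under the first abstraction. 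Writing $w$ for the normal form of $t'[x_{1} \la u_{1}]$, type preservation under reduction makes $(w~u_{2}~...~u_{n})$ well-typed, and confluence makes its normal form equal to $v$. The induction hypothesis, applied to the $n-1$ arguments $u_{2}, ..., u_{n}$, then yields that the head symbol of $w$ is either the head symbol of $v$ or a top variable of $w$.

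It remains to transfer this information from $w$ back to $t$, and this is the step I expect to be the crux. The key point is that because the head symbol $h$ of $t'$ is not $x_{1}$ (it is a variable distinct from $x_{1}$, a sort, or the symbol $\Pi$), the substitution $[x_{1} \la u_{1}]$ neither alters $h$ at the head nor creates a new head redex, and it leaves the leading abstractions $\lambda x_{2}....\lambda x_{k}$ in place; normalization only rewrites the proper subterms and cannot expose fresh leading abstractions since the body still has a non-abstraction head. Hence $w$ has the same head symbol $h$ as $t$, and its top variables are exactly $x_{2}, ..., x_{k}$, a subset of the top variables $x_{1}, ..., x_{k}$ of $t$. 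Feeding this back into the induction hypothesis: if the head of $w$ is the head of $v$ then so is the head of $t$, and if the head of $w$ is a top variable of $w$ then it is a top variable of $t$ as well. In both cases the head symbol of $t$ is the head symbol of $v$ or a top variable of $t$, which completes the induction.
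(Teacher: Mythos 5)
Your proof is correct and follows essentially the same route as the paper, whose entire proof is the one-line observation that if the head symbol of $t$ is not a top variable of $t$ then it survives reduction to the normal form $v$; your induction on $n$, with the key step that substituting $u_{1}$ for $x_{1}$ neither changes a head symbol $h \neq x_{1}$ nor creates a head redex nor exposes new leading abstractions, is exactly the detailed verification of that observation. (Only a notational slip: $t[x_{1} \la u_{1}]'$ should read $t'[x_{1} \la u_{1}]$.)
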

\begin{proof}
Let $x$ be the head symbol of $t$. If $x$ is not a top variable of
$t$ then the head symbol of the normal form of $(t~u_{1}~...~u_{n})$ is also 
$x$, so $x$ is the head symbol of $v$. 
\end{proof}

\begin{proposition}
Let $t$ be a normal term of type $(T \ra T) \ra ... \ra (T \ra T) \ra T$
in the context $\Gamma$
such that the normal form of $(t~\lambda y:T.y~...~\lambda y:T.y)$ is equal to
$c$. Then the term $t$ has the form
$$t = \lambda x_{1}:T \ra T....\lambda x_{n}:T \ra T.
(x_{i_{1}}~(...(x_{i_{p}}~c)...))$$
for some sequence $i_{1}, ..., i_{p}$. 
\end{proposition}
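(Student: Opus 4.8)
The plan is to peel off the leading abstractions of $t$ and then analyse the resulting spine by induction, using at each step the head-symbol proposition together with the very restricted shape of the context $\Gamma$. First I would note that $t$ is an object in $\Gamma$: its type $(T \ra T) \ra ... \ra (T \ra T) \ra T$ is itself of type $Type$, being a product built over $T:Type$. Hence, by the proposition on objects (an object is a variable, an application or an abstraction, never a product) together with the description of normal forms, $t$ must have the shape $\lambda x_{1}:T_{1}....\lambda x_{m}:T_{m}.(h~c_{1}~...~c_{k})$ where $h$ is a variable or a sort. Comparing the derived type of $t$ with the given one, and using that two convertible normal terms are equal, each $T_{j}$ is forced to be $T \ra T$ (for $j \le \min(m,n)$) and the body $(h~c_{1}~...~c_{k})$ must carry the remaining type $(T \ra T) \ra ... \ra (T \ra T) \ra T$ with $n-m$ arrows left.

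Next I would pin down $m$ and $h$ at once. Applying the head-symbol proposition to $(t~\lambda y:T.y~...~\lambda y:T.y)$, whose normal form is $c$, the head $h$ of $t$ is either $c$ or one of the top variables $x_{1},...,x_{m}$. If $h = c$, then since $c:T$ cannot be applied we get $k = 0$, so the body has type $T$ and therefore $n - m = 0$. If $h = x_{j}$, then since $x_{j}:T \ra T$ the spine must be $(x_{j}~c_{1})$ with $c_{1}:T$, so again the body has type $T$ and $m = n$. Thus in every case $t = \lambda x_{1}:T \ra T....\lambda x_{n}:T \ra T.b$, where $b$ is a normal term of type $T$ in $\Gamma_{n} = \Gamma[x_{1}:T \ra T]...[x_{n}:T \ra T]$, and moreover the term obtained from $b$ by replacing every $x_{j}$ with $\lambda y:T.y$ reduces to $c$.

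It then remains to establish, by induction on the size of $b$, the following statement: any normal term $s$ of type $T$ in $\Gamma_{n}$ such that $s$ with every $x_{j}$ replaced by $\lambda y:T.y$ has normal form $c$ is of the shape $(x_{i_{1}}~(...(x_{i_{p}}~c)...))$ (with $p = 0$ meaning $s = c$). Since $T$ is not a product, such an $s$ is not an abstraction, hence $s = (h~e_{1}~...~e_{k})$, possibly with $k = 0$. If $h$ is a constant of $\Gamma_{n}$ it is untouched by the substitution and so remains the head after normalization; as the normal form is $c$ this forces $h = c$ and $k = 0$, inspection of $\Gamma_{n}$ ruling out $d$ since $d$ does not reduce to $c$. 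Otherwise $h = x_{j}$, so $s = (x_{j}~e_{1})$ with $e_{1}$ a normal term of type $T$ whose substituted form still reduces to $c$ (the leading $\lambda y:T.y$ disappears); the induction hypothesis applied to the strictly smaller $e_{1}$ closes the step. Taking $s = b$ yields the asserted form of $t$.

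The hard part is the bookkeeping of the first two steps: showing that $t$ has \emph{exactly} $n$ leading abstractions, all of domain $T \ra T$, rather than fewer with the body carrying the leftover function type, and ruling out every head other than $c$ and the bound variables. Both rest on the scarcity of $\Gamma$ — none of $a$, $b$, $d$, $P$, $F$ nor the sorts can head a term of type $T$ that consumes arguments of type $T \ra T$, so only the $x_{j}$ (to consume such arguments) and $c$ (to produce $T$) survive — and on the head-symbol proposition, which is exactly what converts the hypothesis ``the normal form is $c$'' into a constraint on the head of $t$.
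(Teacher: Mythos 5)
Your proof is correct and takes essentially the same route as the paper, which disposes of this proposition with the single line ``by induction on the number of variable occurrences in $t$'': your argument---forcing the outer shape $\lambda x_{1}:T \ra T....\lambda x_{n}:T \ra T.b$ by comparing normal types, then inducting on the size of $b$ with a head-symbol analysis exploiting the sparseness of $\Gamma$---is exactly the natural elaboration of that induction. The two spots you gloss (a spine $x_{j}$ applied to no argument, and a sort at the head) are indeed excluded by the type bookkeeping you already set up, so no genuine gap remains.
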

\begin{proof}
By induction on the number of variable occurrences in $t$.
\end{proof}

\begin{theorem}
It is undecidable whether or not a pure term is typable in a 
given context. 
\end{theorem}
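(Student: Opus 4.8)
The plan is to reduce the Post correspondence problem to typability in the fixed context $\Gamma$, so that the undecidability sought follows at once from Post's theorem. Given a problem $\{<\varphi_1,\psi_1>,\ldots,<\varphi_n,\psi_n>\}$ I would construct, uniformly in the problem, a single pure term $M$ and show that $M$ is typable in $\Gamma$ if and only if the problem has a solution. The engine of the construction is the constant $F:\Pi x:T.((P~x)\ra T)$. Since each variable of a context carries exactly one type, if $M$ forces one and the same free variable $w$ to occur as second argument of both $(F~u)$ and $(F~v)$, then $w$ must be assigned a type convertible to both $(P~u)$ and $(P~v)$; by confluence this is possible exactly when $u\cong v$. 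Thus $F$ turns a convertibility demand between two objects of type $T$ into a typing constraint, and this is essentially the only device needed.

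Next I would let a free variable $g$ carry a candidate solution. The term $M$ applies $g$ to $n$ arguments of type $T\ra T$, including $n$ copies of $a$ so that Proposition 3 fixes the shape of the type of $g$, and a first $F$-gadget forces $(g~\hat\varepsilon~\ldots~\hat\varepsilon)\cong c$. Proposition 5 then forces $g$, up to conversion, to be a selector $\lambda x_1:T\ra T.\ldots\lambda x_n:T\ra T.(x_{i_1}~(\ldots(x_{i_p}~c)\ldots))$ coding a sequence $i_1,\ldots,i_p$. A second $F$-gadget forces $(g~\hat\varphi_1~\ldots~\hat\varphi_n)\cong(g~\hat\psi_1~\ldots~\hat\psi_n)$; by the reduction behaviour of the selector these normalise to $(\hat\varphi_{i_1}~(\ldots(\hat\varphi_{i_p}~c)\ldots))$ and $(\hat\psi_{i_1}~(\ldots(\hat\psi_{i_p}~c)\ldots))$, which by Proposition 2 are convertible exactly when $i_1,\ldots,i_p$ is a solution. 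To bundle the gadgets into one term I would apply a fresh free variable to all the pieces, so that typability of the whole forces each piece to be an object (Proposition 1); the neutral constant $d$ serves only as scaffolding, a distinguished inhabitant of $T$ kept distinct from $c$ so that Proposition 5 pins the base of the selector unambiguously to $c$.

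For the easy ``if'' direction I would start from a solution $i_1,\ldots,i_p$, extend $\Gamma$ by a context $\Delta$ assigning $g$ the explicit selector above and each auxiliary variable the type dictated by its occurrence, and verify through the typing rules that $M$ becomes an object, the convertibility side-conditions holding by Proposition 2.

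The converse is where the real work lies, and I expect the main obstacle to be controlling $g$ under an \emph{arbitrary} typing of $M$ in some $\Gamma\Delta$. The danger is that a typing might give $g$ a genuinely dependent codomain $(\beta~x_1~\ldots~x_n)$ instead of $T$, or might let $g$ normalise to something from which no clean sequence can be read off. Dispelling this is exactly the purpose of Propositions 3, 4 and 5, so the heart of the argument is to chain them: Proposition 3 fixes the arity and type skeleton of $g$, the $F$-gadgets force its codomain at the relevant arguments to be $T$ and force $(g~\hat\varepsilon~\ldots~\hat\varepsilon)\cong c$, and Propositions 4 and 5 then force $g$ into canonical selector form and extract the sequence. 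I must also check that this sequence is non-empty, as a Post solution requires, which is where keeping $d$ distinct from $c$ (or a routine preprocessing removing empty words) is used. With both directions established, $\{<\varphi_i,\psi_i>\}\mapsto M$ is a computable reduction, and the undecidability of typability follows from that of the Post correspondence problem.
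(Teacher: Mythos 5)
Your overall strategy (reduce Post's correspondence problem, encode words as iterated applications of $a$ and $b$, pin down a selector via Propositions 3--5, and use $F:\Pi x:T.((P~x)\ra T)$ to turn convertibility into typing constraints) is the right one, but the construction as you describe it breaks on the point that is in fact the crux of the paper: \emph{what} a typability problem is allowed to solve for. Your gadgets impose equations between object terms, e.g. $(g~\hat{\varepsilon}~\ldots~\hat{\varepsilon})\cong c$ and $(g~\hat{\varphi_{1}}~\ldots~\hat{\varphi_{n}})\cong(g~\hat{\psi_{1}}~\ldots~\hat{\psi_{n}})$, and you then invoke ``the reduction behaviour of the selector'' $g$. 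But in a pure term $g$ is a ($\lambda$-bound or context-declared) \emph{variable}; the only freedom a typing has is the choice of its \emph{type}. A variable does not reduce: $(g~\lambda y{:}T.y~\ldots~\lambda y{:}T.y)$ is already normal and headed by $g$, so it is never convertible to $c$, and $(g~\hat{\varphi_{1}}~\ldots)\cong(g~\hat{\psi_{1}}~\ldots)$ holds only in the useless case where $\varphi_{i}=\psi_{i}$ for every $i$. Hence your first gadget is unsatisfiable, $M$ is never typable, and the ``if'' direction of the reduction fails. The same confusion appears in your converse direction, where you propose to extend $\Gamma$ by ``assigning $g$ the explicit selector'': a context can only assign $g$ a type, not a value.

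The paper's proof places the unknown where typability can actually reach it: inside the type of $g$. By Proposition 3 that type is $\Pi x_{1}{:}T\ra T\ldots\Pi x_{n}{:}T\ra T.(\beta~x_{1}~\ldots~x_{n})$; a second abstracted variable $h$, applied to both $(g~\tilde{\varphi_{1}}~\ldots)$ and $(g~\tilde{\psi_{1}}~\ldots)$, forces $(\beta~\hat{\varphi_{1}}~\ldots)\cong(\beta~\hat{\psi_{1}}~\ldots)$, and $F$ is used the other way round from your gadget, with the $g$-applications as its \emph{second} argument --- $(F~c~(g~\lambda y.y~\ldots))$ and $(F~d~(g~\lambda y.d~\ldots))$ --- forcing $(\beta~\lambda y{:}T.y~\ldots)\cong(P~c)$ and $(\beta~\lambda y{:}T.d~\ldots)\cong(P~d)$. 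Proposition 4 then yields $\beta\cong\lambda x_{1}{:}T\ra T\ldots\lambda x_{n}{:}T\ra T.(P~(\delta~x_{1}~\ldots~x_{n}))$, and Propositions 5 and 2 are applied to $\delta$, a genuine term occurring inside the type and hence free to be the selector, with the $d$-equation supplying $p>0$. So the dependency of the codomain on $x_{1},\ldots,x_{n}$, which you treat as a danger to be eliminated, is exactly the channel through which the solution sequence is smuggled into a type; eliminating it, as your gadgets would, destroys the reduction.
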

\begin{proof}
Consider a Post problem 
$\{<\varphi_{1}, \psi_{1}>, ..., <\varphi_{n}, \psi_{n}>\}$. We construct the
pure term $t$ such that $t$ is typable in $\Gamma$ if and only if the Post 
problem has a solution.
\begin{tabbing}
\=aaaaaaaaaaaaaaaaaaa\=aaaaaaaaaaaaaa \=\kill
\> \>$t = \lambda f.\lambda g.\lambda h.(f$ \> $(g~a~...~a)$\\
\> \>               \> $(h~(g~\tilde{\varphi_{1}}~...~\tilde{\varphi_{n}}))$\\
\> \>               \> $(h~(g~\tilde{\psi_{1}}~...~\tilde{\psi_{n}}))$\\
\> \>               \> $(F~c~(g~\lambda y.y~...~\lambda y.y))$\\
\> \>               \> $(F~d~(g~\lambda y.d~...~\lambda y.d)))$\\
\end{tabbing}
Assume this term is typable and call $\alpha$ the type of $g$.
The term $(g~a~...~a)$ is well-typed and is an object in $\Gamma \Delta$ so 
$$\alpha \cong \Pi x_{1}:T \ra T....\Pi x_{n}:T \ra T.
(\beta~x_{1}~...~x_{n})$$
where $\beta$ is a term of type $(T \ra T) \ra ... \ra (T \ra T) \ra Type$
in $\Gamma \Delta$.

Then all the variables $y$ bound in the terms $\tilde{\varphi_{i}}$, 
$\tilde{\psi_{i}}$, $\lambda y.y$ and $\lambda y.d$ have type $T$.
The term $(g~\hat{\varphi_{1}}~...~\hat{\varphi_{n}})$ has the type 
$(\beta~\hat{\varphi_{1}}~...~\hat{\varphi_{n}})$, so from the well-typedness
of the term $(h~(g~\hat{\varphi_{1}}~...~\hat{\varphi_{n}}))$
we get that the type of the variable $h$ has the form $\Pi x:\gamma.\gamma'$ 
and
$$\gamma \cong (\beta~\hat{\varphi_{1}}~...~\hat{\varphi_{n}})$$ 
in the same way, from the well-typedness of the term 
$(h~(g~\hat{\psi_{1}}~...~\hat{\psi_{n}}))$
we get
$$\gamma \cong (\beta~\hat{\psi_{1}}~...~\hat{\psi_{n}})$$
so
$$(\beta~\hat{\varphi_{1}}~...~\hat{\varphi_{n}}) \cong
                                    (\beta~\hat{\psi_{1}}~...~\hat{\psi_{n}})$$
From the well-typedness of the term 
$(F~c~(g~\lambda y:T.y~...~\lambda y:T.y))$ we get
$$(\beta~\lambda y:T.y~...~\lambda y:T.y) \cong (P~c)$$
At last from the the well-typedness of the term 
$(F~d~(g~\lambda y:T.d~...~\lambda y:T.d))$
we get
$$(\beta~\lambda y:T.d~...~\lambda y:T.d) \cong (P~d)$$
Since the term $\beta$ has type $(T \ra T) \ra ... \ra (T \ra T) \ra Type$, 
the head symbol of the normal form of the term $\beta$ cannot be a top 
variable of $\beta$, so it is the variable $P$ and we have 
$$\beta \cong 
 \lambda x_{1}:T \ra T....\lambda x_{n}:T \ra T.(P~(\delta~x_{1}~...~x_{n}))$$
For some term $\delta$ of type $(T \ra T) \ra ... \ra (T \ra T) \ra T$.
We get
$$(\delta~\hat{\varphi_{1}}~...~\hat{\varphi_{n}}) 
                         \cong (\delta~\hat{\psi_{1}}~...~\hat{\psi_{n}})$$
$$(\delta~\lambda y:T.y~...~\lambda y:T.y) \cong c$$
$$(\delta~\lambda y:T.d~...~\lambda y:T.d) \cong d$$
The second equality shows that the normal form of the term $\delta$ has the 
form
$$\lambda x_{1}:T \ra T....\lambda x_{n}:T \ra T.
(x_{i_{1}}~(...(x_{i_{p}}~c)...))$$
for some sequence $i_{1}, ..., i_{p}$.
The third equality shows that $p > 0$ and the first one that
$$(\hat{\varphi_{i_{1}}}~(...(\hat{\varphi_{i_{p}}}~c)...)) \cong
                          (\hat{\psi_{i_{1}}}~(...(\hat{\psi_{i_{p}}}~c)...))$$
so the sequence $i_{1}, ..., i_{p}$ is a solution to the Post problem.

Conversely assume that the Post problem has a solution $i_{1}, ..., i_{p}$, 
then by giving the following types to the variables $f$, $g$ and $h$
$$f:(P~(a~(...(a~c)...))) \ra T \ra T \ra T \ra T \ra T$$
$$g:\Pi x_{1}:T \ra T....\Pi x_{n}:T \ra T.
(P~(x_{i_{1}}~(...(x_{i_{p}}~c)...)))$$
$$h:(P~(\hat{\varphi_{i_{1}}}~(...(\hat{\varphi_{i_{p}}}~c)...))) \ra T$$
and the type $T$ to all the other variables of the term $t$, we get a term 
$t'$ well-typed in $\Gamma$, which is an object and such that $t = |t'|$.
\end{proof}

\begin{remark}
Along the way, we have proved that in the simply typed $\lambda$-calculus, 
the unification problems of the form
$$(f~t_{1}~...~t_{n}) = (f~t'_{1}~...~t'_{n})$$
$$(f~u_{1}~...~u_{n}) = u'$$
$$(f~v_{1}~...~v_{n}) = v'$$
where $t_{i}, t'_{i}, u_{i}, u', v_{i}, v'$ are closed terms and $f$ a 
third order variable are undecidable.

It is decidable if each of these equations has a solution or not 
(since the first one is flexible-flexible \cite{Huet75} \cite{Huet76} and the 
others third order matching problems \cite{lics}), but it is undecidable 
whether or not they have a solution {\it in common}.
If the variable $f$ is second order the problems of this form
are decidable since the second order matching algorithm \cite{Huet76} 
\cite{HueLan} produces a finite complete set of closed solutions.
\end{remark}

\section*{Acknowledgements}

The author thanks Frank Pfenning for many stimulating and helpful discussions 
on this problem and Pawel Urzyczyn for his careful reading of a previous draft
of this paper.


\begin{thebibliography}{99}

\bibitem{Barendregt}
H. Barendregt,
Introduction to Generalized Type Systems,
{\it Journal of Functional Programming} {\bf 1, 2} (1991) 125--154.

\bibitem{Coquand91}
Th. Coquand, An Algorithm for Testing Conversion in Type Theory,
{\it Logical Frameworks}, G. Huet and G. Plotkin (Eds.), Cambridge University
Press (1991). 

\bibitem{DamMil}
L. Damas, R. Milner, 
Principal Type-Scheme for Functional Programs,
{\it Proceedings of Principles of Programming Languages} (1982).

\bibitem{cras}
G. Dowek,
L'Ind\'ecidabilit\'e du Filtrage du Troisi\`eme Ordre dans les Calculs
avec Types D\'ependants ou Constructeurs de Types
(The Undecidability of Third Order Pattern Matching in Calculi with Dependent 
Types or Type Constructors),
{\it Comptes Rendus \`{a} l'Acad\'{e}mie des Sciences}
{\bf I, 312, 12} (1991) 951--956.

\bibitem{lics}
G. Dowek,
Third Order Matching is Decidable, {\it Proceedings of Logic in Computer 
Science} (1992) 2--10.

\bibitem{Geuvers}
H. Geuvers,
The Church-Rosser Property for $\beta \eta$-reduction in Typed Lambda Calculi,
{\it Proceedings of Logic in Computer Science} (1992) 453--460.

\bibitem{Goldfarb}
W.D. Goldfarb,
The Undecidability of the Second-Order Unification Problem,
{\it Theoretical Computer Science} {\bf 13} (1981) 225--230.

\bibitem{HHP}
R. Harper, F. Honsell, G. Plotkin,
A Framework for Defining Logics,
{\it Proceedings of Logic in Computer Science} (1987) 194--204.

\bibitem{Huet73}
G. Huet,
The Undecidability of Unification in Third Order Logic, 
{\it Information and Control} {\bf 22} (1973) 257--267.

\bibitem{Huet75}
G. Huet,
A Unification Algorithm for Typed $\lambda$-calculus,
{\it Theoretical Computer Science} {\bf 1} (1975) 27--57.

\bibitem{Huet76}
G. Huet,
R\'{e}solution d'\'{E}quations dans les Langages d'Ordre 1, 2, ..., $\omega$,
{\it Th\`{e}se de Doctorat d'\'{E}tat}, Universit\'{e} de Paris VII (1976).

\bibitem{HueLan}
G. Huet, B. Lang,
Proving and Applying Program Transformations Expressed with Second Order
Patterns,
{\it Acta Informatica} {\bf 11} (1978) 31--55.

\bibitem{Pfenning90}
F. Pfenning,
Logic Programming in the LF Logical Framework, 
{\it Logical Frameworks}, G. Huet and G. Plotkin (Eds.), 
Cambridge University Press (1991).

\bibitem{Post}
E. L. Post, 
A Variant of a Recursively Unsolvable Problem, {\it Bulletin of American 
Mathematical Society} {\bf 52} (1946) 264--268.

\bibitem{Salvesen}
A. Salvesen,
The Church-Rosser Theorem for Pure Type Systems with $\beta \eta$-reduction,
Manuscript, University of Edinburgh (1991).

\end{thebibliography}
\end{document}